\newtheorem{proposition}{\bf Proposition}[section]
\newcommand\bbone{\ensuremath{\mathbbm{1}}}
\def\bomega{\mbox{\boldmath $\omega$}} 
\def\bpsi{\mbox{\boldmath $\psi$}}
\def\br{\mathbf{r}}
\def\by{\mathbf{y}} 
\def\bs{\mathbf{s}}
\def\bS{\mathbf{S}}
\def\bI{\mathbf{I}}
\def\bz{\mathbf{z}}
\def\bc{\mathbf{c}} 
\def\bx{\mathbf{x}}
\def\bw{\mathbf{w}}%
\def\bA{\mathbf{A}}
\begin{document}

\title{\bf\textsf{Numerical evaluation of dual norms via the MM algorithm}}
\author[1]{Mauro Bernardi\thanks{Corresponding author: Via C. Battisti, 241, 35121 Padua, Italy. e-mail: mauro.bernardi@unipd.it, Phone.: +39.049.8274165.}}
\author[2]{Marco Stefanucci}
\author[3]{Antonio Canale}
\affil[1]{Department of Statistical Sciences, University of Padova and Institute for Applied Mathematics ``Mauro Picone'', IAC-CNR, Rome, Italy}
\affil[2]{Dipartimento di Scienze Statistiche, Universit\`a degli Studi di Roma - La Sapienza, Roma, Italy}
\affil[3]{Department of Statistical Sciences, University of Padova, Padova, Italy}

\date{\today}

\maketitle
\begin{abstract}
\noindent We deal with the problem of numerically computing the dual norm, which is important to study sparsity-inducing regularizations \citep[][]{jenatton2011structured,bach_etal.2012}. The dual norms find application in optimization and statistical learning, for example, in the design of working-set strategies, for characterizing dual gradient methods, for dual decompositions and in the definition of augmented Lagrangian functions. Nevertheless, the dual norm of some well-known sparsity-inducing regolarization methods are not analytically available. Examples are the overlap group $\ell_2$-norm of \cite{jenatton2011structured} and the elastic net norm of \cite{zou_hastie.2005}. Therefore we resort to the Majorization-Minimization principle of \cite{lange.2016} to provide an efficient algorithm that leverages a reparametrization of the dual constrained optimization problem as unconstrained optimization with barrier. Extensive simulation experiments have been performed in order to verify the correctness of operation, and evaluate the performance of the proposed method. Our results demonstrate the effectiveness of the algorithm in retrieving the dual norm even for large dimensions.
\end{abstract}
%
%

%
\section{Introduction}
\label{sec:intro}
%
\noindent In this paper we deal with the problem of computing the dual norm, which is important to
study sparsity-inducing regularizations \citep[][]{jenatton2011structured,bach_etal.2012}. 
For any vector $\bx=(x_1,\dots,x_p)^\intercal\in\mathbb{R}^p$ the dual norm $\Omega^\ast(\bx)$ of the norm $\Omega(\bx)\in\mathbb{R}^+$ (defined over the euclidean space $\mathcal{X}$) is defined as $\Omega^\ast(\bx)=\sup_{\bz\in\mathbb{R}^p}\big\{\bz^\intercal\bx\,\vert\,\Omega(\bz)\leq1\big\}$. The previous optimization problem can be equivalently rewritten as
\begin{align}
\label{eq:dual_norm_opt_prob_def_1}
&\Omega^\ast(\bx)=\inf_{\bz\in\mathbb{R}^p}-\bz^\intercal\bx\\
\label{eq:dual_norm_opt_prob_def_2}
&\upsilon(\bz)\geq 0,
\end{align}
where $\upsilon(\bz)=1-\Omega(\bz)\in(0,1)$. Moreover, the dual norm of $\Omega^\ast(\bx)$ is $\Omega(\bx)$ itself, and as a consequence, the formula above holds also if the roles of $\Omega(\bx)$ and $\Omega^\ast(\bx)$ are exchanged. 
The dual norms find application in several contexts: in the design of working-set strategies, for characterizing dual gradient methods, for dual decompositions and in the definition of augmented Lagrangian functions that arises in the context of optimization methods \citep[see, e.g.][]{boyd_etal.2011}. Two further interesting applications of dual norms are to verify if the irrepresentable condition holds in the context of model selection consistency of sparsity-inducing penalties, \citep[see, e.g.][]{buhlmann_van_de_geer.2011} and for computing optimality conditions of sparse regularized problems \citep[see, e.g.][]{osborne_etal.2000}. In the latter case, sparse regularized problems involve convex optimizations of the form 
\begin{equation}
\label{eq:sparse_prob}
\max_{\bx\in\mathbb{R}^p}h(\bx)+\lambda\Omega(\bx),
\end{equation}
for any $\lambda>0$, where $f:\mathbb{R}^p\to\mathbb{R}$ is a convex differentiable function and $\Omega:\mathbb{R}^p\to\mathbb{R}$ is a
sparsity-inducing---typically nonsmooth and non-euclidean---norm. Simple calculations show that $\bx^\star\in\mathbb{R}^p$ is optimal for \eqref{eq:sparse_prob} if and only if $\lambda^{-1}\nabla f(\bx^\star)\in\partial\Omega(\bx^\star)$ where
\begin{equation}
\partial\Omega(\bx^\star)=\begin{cases}
\{\bomega\in\mathbb{R}^p;\,\Omega^\ast(\bomega)\leq 1\};&\mathrm{if}\quad\bx^\star=0\\
\{\bomega\in\mathbb{R}^p;\,\Omega^\ast(\bomega)\leq 1,\,\&\,\bomega^\intercal\bx^\star=\Omega(\bx^\star)\};&\mathrm{if}\quad\bx^\star\neq0.
\end{cases}
\end{equation}
From previous conditions it is possible to derive  interesting properties of the problem \eqref{eq:sparse_prob}, as well as efficient algorithms for solving it.
Consider now the problem of checking the irrepresentable condition for \eqref{eq:sparse_prob} where, for example, $h(\bx)=\Vert \bA\bx-\by\Vert_2$ and $\bA\in\mathbb{R}^{n\times p}$ is a given design matrix and $\by\in\mathbb{R}^n$ is known. We need to calculate the dual norm $\Omega^\ast\big( {\mathbf{A}}_0^\intercal {\mathbf{A}}_1 ({\mathbf{A}}_1^\intercal {\mathbf{A}}_1)^{-1} \mathbf{r}_1  \big)$, where $\bA_0$ and $\bA_1$ are partitions of the design matrix that corresponds to the true zeros and non-zeros of $\bx$, respectively, and $\br_1$ is the the first derivative of the norm $\Omega(\bx)$, $\Omega^\prime(\bx)$, evaluated at the optimal solution of \eqref{eq:sparse_prob}, $\bx^\star$. The irrepresentable condition is important to study sparse-inducing norms for model selection purposes, \citep[see, e.g.][for an overview]{zhao_yu.2006,buhlmann_van_de_geer.2011}.\par
Special cases of sparsity-inducing norms are, the $\ell_1$-norm \citep[][]{tibshirani1996} $\Omega_1(\bx)=\Vert\bx\Vert_1=\sum_{j=1}^p\vert x_j\vert$, the $\ell_\infty$-norm $\Omega_\infty(\bx)=\max_{j=1,\dots,p}\vert x_j\vert$, the $\ell_2$-norm \citep[][]{hoerl_kennard.2000}, $\Omega_2(\bx)=\Vert\bx\Vert_2$, the Group $\ell_2$-norm \citep[][]{yuan_lin.2006} $\Omega_G(\bx)=\sum_{g=1}^G\sqrt{w_g}\Vert \bx_g\Vert_2$, where $\bx=\{\bx_{(1)},\dots,\bx_{(G)}\}$ and each $\bx_{(g)}$ for $1\leq g\leq G$ represents a group from $\bx$ and $w_g$ represents the number of weights in $\bx_g$ for $1\leq g\leq G$. In all those cases an analytical solution for the dual norm is available. In particular, the $\ell_1$- and $\ell_\infty$-norms are dual to each other, and the $\ell_2$-norm is self-dual (dual to itself). As concerns the Group $\ell_2$-norm $\Omega_G(\bx)$ we have instead that $\Omega^\ast_G(\bx)=\max_{g=1,\dots,G}\frac{1}{\sqrt{w_g}}\Vert \bx_g\Vert_2$. However, for the overlap group $\ell_2$-norm, introduced by \cite{jenatton2011structured} as a relaxation of the group $\ell_2$-norm that allows for groups to share the same components of the vector $\bx$, an analytical solution for the dual norm does not exists. The overlap group LASSO norm finds application in many relevant applied contexts of statistical learning to induce sparse solutions, \citep[see][for extensive examples]{jenatton2011structured}. When an analytical solution to the dual norm does not exist we can retrieve optimization methods for solving problem \eqref{eq:dual_norm_opt_prob_def_1}-\eqref{eq:dual_norm_opt_prob_def_2}. However, the computational cost of performing the constrained optimziation in \eqref{eq:dual_norm_opt_prob_def_1}-\eqref{eq:dual_norm_opt_prob_def_2} is especially high when $p$ is large, \citep[see][]{jenatton2011structured}.\par
Previous considerations motivate our idea to provide an efficient algorithm for computing the dual norm for the case where an analytical solution does not exist or direct optimization of \eqref{eq:dual_norm_opt_prob_def_1}-\eqref{eq:dual_norm_opt_prob_def_2} is not feasible. The proposed algorithm relies on the Majorization-Minimization (MM, hereafter) principle of \cite{lange.2013,lange.2016} for dealing with the constrained optimization problem with penalties or barriers.\par
The remaining of the paper is organized as follows. Section \ref{sec:barrier_methods} introduces barrier methods for efficiently dealing with constrained optimization problems and presents the usual Newton-Raphson solution. Section \ref{sec:mm_algo} introduces the MM algorithm and Section \ref{sec:appl} applies the MM algorithm to the overlap group $\ell_2$-norm of \cite{jenatton2011structured}. Section \ref{sec:conclu} concludes.
%
\section{Barrier methods for constrained optimization}
\label{sec:barrier_methods}
%
\noindent Let us consider the constrained optimization problem 
\begin{equation}
\label{eq:copt}
\Omega^\ast(\bx)=\inf_{\bz\in\mathbb{R}^p}-\bz^\intercal\bx,\quad\text{s.t.}\quad\upsilon(\bz)\geq 0,
\end{equation}
where $\mathbf{z}\in\mathbb{R}^p$, $\mathbf{x}\in\mathbb{R}^p$ and $\upsilon(\bz)=1-\Omega(\bz)\in(0,1)$. 
Optimization \eqref{eq:copt} falls into the class of linear programming problems subject to nonlinear constraints, for which there exists lots of efficient solvers \citep[See, e.g.][]{nocedal_wright.2006}. However, the huge number of constraints may be a crucial issue that even the most sophisticated tools in the linear programming literature are not able to manage easily. Here, we adopt a different strategy, avoiding reparametrizations and transforming \eqref{eq:copt} onto an unconstrained problem \citep[see][]{luenberger_yinyu.2021}. Now, let us define the Lagrangian function 
\begin{equation}
\label{eq:L}
\mathcal{L}_\varrho(\bz) \equiv \mathcal{G}(\bz) + \varrho \mathcal{J}(\bz),
\end{equation}
as the sum of the objective function $\mathcal{G}(\bz)$ and of the penalization term $\mathcal{J}(\bz)$ weighted by a positive penalization parameter $\varrho>0$. And, in particular, $\mathcal{J}(\bz)$ is chosen to be the negative of the logarithmic transformation of the constraint violation, namely:
\begin{equation}
\label{eq:J}
\mathcal{J}(\bz) = -\log(\upsilon(\bz)).
\end{equation}
Then, assuming that $\varrho$ is big enough and requiring other technical but mild conditions 
\citep[Theorem 17.3, Page 507]{nocedal_wright.2006}, the constrained problem \eqref{eq:copt} shares the 
same solution of the penalized unrestricted optimization of the Lagrangian function, 
that is $\widehat{\bz} = \arg\min_{\bz} \mathcal{L}_\varrho(\bz)$. Therefore, the unconstrained optimization problem becomes
\begin{align}
\Omega^\ast(\bx)&=\inf_{\bz\in\mathbb{R}^p}\mathcal{L}_\varrho(\bz)\\
\mathcal{L}_\varrho(\bz)&=\mathcal{G}(\bz)-\varrho\log(\upsilon(\bz)),
\end{align}
for $\varrho\in\mathbb{R}^+$. 
%
\subsection{Newton-Raphson solution}
\label{sec:newton_raphson_sol}
%
\noindent Newton-Raphson-type algorithms require the evaluation of the gradient and the hessian matrix of the objective function 
\begin{align}
\mathcal{L}_\varrho^\prime(\bz)&=-\bx-\varrho\frac{\upsilon^\prime(\bz)}{\upsilon(\bz)}\\
\mathcal{L}_\varrho^{\prime\prime}(\bz)&=-\varrho\frac{\upsilon^{\prime\prime}(\bz)}{\upsilon(\bz)}+\varrho\frac{\upsilon^{\prime}(\bz)\big(\upsilon^{\prime}(\bz)\big)^\intercal}{(\upsilon(\bz))^2},
\end{align}
where $\upsilon^{\prime}(\bz)=-\Omega^{\prime}(\bz)$ and $\upsilon^{\prime\prime}(\bz)=-\Omega^{\prime\prime}(\bz)$. In the following section, we provide some examples of calculation of the main ingredients of the Newton-Raphson algorithm (e.g., the first and second derivative of the norm, $\Omega^{\prime}(\bz)$ and $\Omega^{\prime\prime}(\bz)$) for the numerical optimization with barrier, in the special cases of the $\ell_2$ and group $\ell_2$ norms. For those norms, the dual solution is available analytically. For the $\ell_1$-norm, $\Omega_1(\bx)=\Vert\bx\Vert_1=\sum_{j=1}^p\vert x_j\vert$ which corresponds to the LASSO penalty \citep[see][]{tibshirani1996}, the dual solution exist $\Omega_1^\ast(\bx)=\Vert\bx\Vert_\infty$, for any $\bx\in\mathbb{R}^p$. However, in such case the second derivative of the norm is not available since $\Omega_1(\bx)\in\mathcal{C}^1$ and the Newton-Raphson algorithm fails.
%
\subsubsection{Examples: $\ell_2$ and group $\ell_2$ norms}
\label{sec:example_analytical_norms}
%
\noindent Let $\bx\in\mathbb{R}^p$, for the $\ell_2$-norm $\Omega_2(\bx)=\Vert\bx\Vert_2$ which corresponds to the RIDGE penalty of \cite{hoerl_kennard.2000}, we have:
\begin{align}
\Omega_2^{\prime}(\bx)&=\frac{\bx}{\Vert\bx\Vert_2}\\
\Omega_2^{\prime\prime}(\bx)&=-\frac{\bx\bx^\intercal}{\big(\Vert\bx\Vert_2\big)^3}+\frac{1}{\Vert\bx\Vert_2}\bI_p.
\end{align}
For the Group $\ell_2$-norm which corresponds to the Group-LASSO penalty of \cite{yuan_lin.2006}, we have $\Omega_G(\bx)=\sum_{g=1}^G\sqrt{w_g}\Vert \bx_g\Vert_2$, where $\bx=\{\bx_{(1)},\dots,\bx_{(G)}\}$ and each $\bx_{(g)}$ for $1\leq g\leq G$ represents a group from $\bx$ and $w_g$ represents the number of weights in $\bx_g$ for $1\leq g\leq G$. Therefore:
\begin{align}
\Omega_G^{\prime}(\bx)&=\begin{bmatrix}\frac{\sqrt{w_1}\bx_1}{\Vert\bx_1\Vert_2}\\ \vdots\\\frac{\sqrt{w_G}\bx_G}{\Vert\bx_G\Vert_2}\end{bmatrix}\\
\Omega_G^{\prime\prime}(\bx)&=\mathsf{diag}\big(\sqrt{w_1}\Omega_2^{\prime\prime}(\bx_1),\dots,\sqrt{w_G}\Omega_2^{\prime\prime}(\bx_G)\big).
\end{align}
Both the $\ell_2$-norm $\Omega_2(\bx)$ and the group $\ell_2$-norm $\Omega_G(\bx)$ admit an analytical closed form expressions for the corresponding dual norms $\Omega_2^\ast(\bx)$ and $\Omega_G^\ast(\bx)$, therefore optimization methods for finding the numerical solution of the optimization problem in equations \eqref{eq:dual_norm_opt_prob_def_1}-\eqref{eq:dual_norm_opt_prob_def_2} are not required for these norms. In the following section we introduce a Newton-Raphson-type algorithm for dealing with the problem of finding the dual norm of the group $\ell_2$-norm with overlapping groups. The analytical solution for the overlap-group $\ell_2$-norm is not available, \citep[see, e.g.][]{jenatton2011structured}.
%
\subsubsection{Overlap group $\ell_2$-norm}
\label{sec:ovg_norm}
%
\noindent Let $\bx\in\mathbb{R}^{p}$, then the overlap-group-LASSO norm \citep[][]{jenatton2011structured}, is defined as
\begin{equation}
\Omega_{OG}(\bx)=\sum_{g=1}^{G}\Vert \bw_g\odot\bx\Vert_2,
\end{equation}
where $\bw_g=\widetilde{\bw}\odot\bs_g\in\mathbb{R}^{p}$ and $\bs_g\in\mathbb{R}^{p}$ is a group selection vector that selects the elements of $\bx$ belonging to the $g$-th group in such a way that $\bx_g=\bs_g^\intercal\bx$ for $g=1,\dots,G$ and $\bS\in\mathbb{R}^{G\times p}$ is a group selection matrix obtained by collecting the group selection vectors by row, i.e. $\bS=\big(\bs_1, \bs_2 , \dots , \bs_G\big)^\intercal$. Unlike the group-LASSO, $\widetilde{\bw}\in\mathbb{R}^p$ is a vector that specifies the weight assigned to each element of $\bx$ and it is not group specific. Specifically, for $1\leq g\leq G$, $\bs_g=\big(s_1^{(g)},\dots,s_p^{(g)}\big)^\intercal\in\mathbb{R}^p$ has general entry $s^{(g)}_{l}$ defined as $1$ if the $l$-th component of $\bx$, $x_{l}$, belongs to group $g$ and $0$ otherwise and $\widetilde{\bw}=\big(\widetilde{w}_1,\dots,\widetilde{w}_p\big)^\intercal$ with general entry
$\widetilde{w}_l=\big(\sum_{g=1}^{G}s^{(g)}_{l}\big)^{-1}$ for $l=1,\dots,p$. The previous overlap-group $\ell_2$-norm can be easily rewritten as 
\begin{equation}
\label{eq:ovg_norm_definition}
\Omega_{OG}(\bz)=\sum_{g=1}^{G}\Vert \bc_g\odot\bx_g\Vert_2,
\end{equation}
where $\bx_g=(x_{g,1},\dots,x_{g,n_g})=\bs_g^\intercal\bx\in\mathbb{R}^{n_g}$ has $g$-th entry corresponding to the elements of $\bx\in\mathbb{R}^{p}$ that belong to the $g$-th group, $\bc_g=\big(c_{g,1},\dots,c_{g,n_g})\in\mathbb{R}^{n_g}$ with $n_g=\sum_{l=1}^p s_{l}^{(g)}$ and $\bc_{g}=\widetilde\bw\big[\bbone_{[0,\infty)}(\widetilde\bw)\big]$ where $\bbone_{[0,\infty)}(\widetilde{\bw}_g)=\begin{bmatrix}\bbone_{[0,\infty)}(\widetilde{w}_{1}) & \cdots & \bbone_{[0,\infty)}(\widetilde{w}_{p})\end{bmatrix}$ denotes the element-wise indicator function, for $g=1,\dots,G$. When some of the groups overlap, the penalty $\Omega_{OG}(\bx)$ is still a norm (if all covariates are in at least one group) whose ball has singularities when some $\bx_g$ are equal to zero, (which is not our case).\par 
For the overlap-group $\ell_2$-norm defined in equation \eqref{eq:ovg_norm_definition}, we have
\begin{align}
\Omega_{OG}^{\prime}(\bx)&=\begin{bmatrix}x_1\widetilde{w}_{1}^2\sum_{g=1}^G \frac{1}{\Vert \bc_g\odot\bx_g\Vert_2}\\ \vdots\\
x_p
\widetilde{w}_{p}^2\sum_{g=1}^G \frac{1}{\Vert \bc_g\odot\bx_g\Vert_2}\end{bmatrix},
\end{align}
and
\begin{align}
\Omega_G^{\prime\prime}(\bx)&=\begin{bmatrix}
\Omega_{OG,(1,1)}^{\prime\prime}(\bx)&\Omega_{OG,(1,2)}^{\prime\prime}(\bx)&\cdots &\Omega_{OG,(1,p)}^{\prime\prime}(\bx)\\
\Omega_{OG,(2,1)}^{\prime\prime}(\bx)&\Omega_{OG,(2,2)}^{\prime\prime}(\bx)&\cdots &\Omega_{OG,(2,p)}^{\prime\prime}(\bx)\\
\vdots&\vdots&\ddots&\vdots\\
\Omega_{OG,(p,1)}^{\prime\prime}(\bx)&\Omega_{OG,(p,2)}^{\prime\prime}(\bx)&\cdots &\Omega_{OG,(p,p)}^{\prime\prime}(\bx)\\
\end{bmatrix},
\end{align}
where
\begin{align}
\Omega_{OG,(j,j)}^{\prime\prime}(\bx)&=\widetilde{w}_{j}^2\Bigg(\sum_{g=1}^G \frac{1}{\Vert \bc_g\odot\bx_g\Vert_2} -x_j^2\widetilde{w}_{j}^2\sum_{g=1}^G \frac{1}{\big(\Vert \bc_g\odot\bx_g\Vert_2\big)^3}\Bigg)\\
\Omega_{OG,(j,k)}^{\prime\prime}(\bx)&=x_jx_k\widetilde{w}_{j}^2\widetilde{w}_{k}^2\sum_{g=1}^G \frac{1}{\big(\Vert \bc_g\odot\bx_g\Vert_2\big)^3}.
\end{align}
Note that the hessian matrix is singular if and only if there exists a $j\in\{1,\dots,p\}$ such that $x_j$ does not belong to any group, e.g., $g\notin j$ for $g=1,\dots,G$.
%
\section{MM algorithm for barrier methods}
\label{sec:mm_algo}
%
Now the challenge is to find an efficient way to optimize $\mathcal{L}_\varrho(\mathbf{z})$. Fortunately, 
$\mathcal{J}(\mathbf{z})$ is concave, then it can be minorized by a linear function, making so possible to implement a minorization-maximization (MM) 
scheme. Before proceeding let us briefly introduce the MM algorithm and its basic properties.\newline

\noindent Suppose we want to minimise the objective function $\mathcal{L}_\varrho\big(\mathbf{z}\big):\Re^{p+1}\to\Re$ and denote with $\widehat{\mathbf{z}}^{(k)}$ the current iterate, the MM algorithm proceeds in two steps:
\begin{itemize}
\item[{\it (i)}] create a surrogate (majorizer) function $g\big(\mathbf{z}\vert\widehat{\mathbf{z}}^{(k)}\big):\mathbb{R}^p\times\mathbb{R}^p\to\mathbb{R}$ that satisfies
\begin{enumerate}
\item[{\it (i.1)}] $g\big(\widehat{\mathbf{z}}^{(k)}\vert\widehat{\mathbf{z}}^{(k)}\big)=\mathcal{L}_\varrho\big(\widehat{\mathbf{z}}^{(k)}\big)$;
\item[{\it (i.2)}] $g\big(\mathbf{z}\vert\widehat{\mathbf{z}}^{(k)}\big)\geq \mathcal{L}_\varrho\big(\mathbf{z}\big)$, for all $\mathbf{z}$;
\end{enumerate}
\item[{\it (ii)}] $\widehat{\mathbf{z}}^{(k+1)}=\arg\min_{\mathbf{z}\in\Re^{p+1}}g\big(\mathbf{z}\vert\widehat{\mathbf{z}}^{(k)}\big)$.
\end{itemize}
Therefore, $g\big(\mathbf{z}^{(k+1)}\vert\vartheta^{(k)}\big)\leq g\big(\widehat{\mathbf{z}}^{(k)}\vert\widehat{\mathbf{z}}^{(k)}\big)$, with conditions {\it (i)} and {\it (ii)} entails the descent property, i.e., $\mathcal{L}_\varrho\big(\mathbf{z}^{(k+1)}\big)\leq \mathcal{L}_\varrho\big(\widehat{\mathbf{z}}^{(k)}\big)$. For further references we refer to \cite{lange.2010}, \cite{hunter_lange.2004}, \cite{lange_etal.2014} and to \cite{wu_etal.2010} for a comparison between EM and MM.\par
One way of improving the barrier method is to change the barrier constant as the iterations proceed. This sounds vague, but matters simplify enormously if we view the construction of an adaptive barrier method from the perspective of the MM algorithm. Proposition \ref{prop:mm} provide the analytic expression for the majorizer function 
$\mathcal{L}_\varrho(\mathbf{z} \vert \widehat{\mathbf{z}}^{(k)})$, that is the basic ingredient to derive our MM update.
\begin{proposition}
\label{prop:mm}
Consider the Lagrangian functions $\mathcal{L}_\varrho(\mathbf{z})$ as given in equation \eqref{eq:L}, and define
\begin{equation}
g(\bz\vert\widehat{\mathbf{z}}^{(k)})=-\bz^\intercal\bx-\varrho\upsilon(\widehat{\mathbf{z}}^{(k)})\log(\upsilon(\bz))+\varrho\upsilon^\prime(\widehat{\mathbf{z}}^{(k)})(\bz-\widehat{\mathbf{z}}^{(k)}),
\end{equation}
where $\upsilon^\prime(\widehat{\mathbf{z}}^{(k)})$ is the gradient of the function $\upsilon(\bz)$ evaluated at $\widehat{\mathbf{z}}^{(k)}$ defined in equation \eqref{eq:dual_norm_opt_prob_def_2}. Then $\mathcal{L}_\varrho(\mathbf{z} \vert \widehat{\mathbf{z}}^{(k)})$ majorizes $\mathcal{L}_\varrho(\mathbf{z})$, which means that for any pair $(\mathbf{z}, \widehat{\mathbf{z}}^{(k)})$ we have $\mathcal{L}_\varrho(\mathbf{z}) \leq \mathcal{L}_\varrho(\mathbf{z} \vert \widehat{\mathbf{z}}^{(k)})$, and $\mathcal{L}_\varrho(\mathbf{z}) = \mathcal{L}_\varrho(\mathbf{z} \vert \widehat{\mathbf{z}}^{(k)})$ if and only if $\mathbf{z} = \widehat{\mathbf{z}}^{(k)}$.
\end{proposition}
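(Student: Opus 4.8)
The plan is to strip off the part that is common to both sides and reduce the statement to two one-line convexity facts. First I would note that the affine term $-\bz^\intercal\bx$ appears identically in $\mathcal{L}_\varrho(\bz)$ and in $g(\bz\vert\widehat{\bz}^{(k)})$, so it cancels in the difference $g(\bz\vert\widehat{\bz}^{(k)})-\mathcal{L}_\varrho(\bz)$, and the whole claim becomes a comparison of the barrier contributions alone. Writing $u_k=\upsilon(\widehat{\bz}^{(k)})\in(0,1)$, the inequality to be proved is
\[
\varrho\,u_k\log\frac{\upsilon(\bz)}{u_k}\;\le\;\varrho\,\upsilon^\prime(\widehat{\bz}^{(k)})^\intercal\big(\bz-\widehat{\bz}^{(k)}\big),
\]
after which one rearranges, carrying along the $\bz$-independent constant $\varrho\,u_k\log u_k$ (which is what makes the two sides agree at $\widehat{\bz}^{(k)}$; being constant in $\bz$ it is irrelevant to the minimisation in step \emph{(ii)}).

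The first ingredient is the elementary inequality $\log\xi\le\xi-1$, valid for every $\xi>0$ with equality if and only if $\xi=1$; applied with $\xi=\upsilon(\bz)/u_k$ and multiplied by $u_k>0$ it gives $u_k\log\frac{\upsilon(\bz)}{u_k}\le\upsilon(\bz)-u_k$. The second ingredient is the concavity of $\upsilon$: since $\upsilon(\bz)=1-\Omega(\bz)$ and $\Omega$ is a norm, hence convex, $\upsilon$ is concave and therefore lies below each of its supporting hyperplanes, i.e.\ $\upsilon(\bz)\le u_k+\upsilon^\prime(\widehat{\bz}^{(k)})^\intercal(\bz-\widehat{\bz}^{(k)})$ for all $\bz$, provided $\upsilon$ is differentiable at $\widehat{\bz}^{(k)}$ — which holds in the regime considered, where $\Omega(\widehat{\bz}^{(k)})\neq 0$ and no active group vanishes. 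Chaining the two, $u_k\log\frac{\upsilon(\bz)}{u_k}\le\upsilon(\bz)-u_k\le\upsilon^\prime(\widehat{\bz}^{(k)})^\intercal(\bz-\widehat{\bz}^{(k)})$, which is the displayed bound after multiplying by $\varrho>0$; rearranging yields $\mathcal{L}_\varrho(\bz)\le g(\bz\vert\widehat{\bz}^{(k)})$, i.e.\ the majorisation. Evaluating at $\bz=\widehat{\bz}^{(k)}$ makes $\xi=1$ and the supporting-hyperplane bound tight simultaneously, so $g(\widehat{\bz}^{(k)}\vert\widehat{\bz}^{(k)})=\mathcal{L}_\varrho(\widehat{\bz}^{(k)})$, giving tangency.

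The step I expect to need the most care is the converse, that equality holds \emph{only} at $\bz=\widehat{\bz}^{(k)}$. Equality in the log step already forces $\upsilon(\bz)=u_k$, i.e.\ $\Omega(\bz)=\Omega(\widehat{\bz}^{(k)})$; what remains is to use equality in the supporting-hyperplane step to pin the point down. For $\Omega=\Omega_2$ this is clean: the equality condition becomes the Cauchy--Schwarz equality $\|\bz\|_2\,\|\widehat{\bz}^{(k)}\|_2=\widehat{\bz}^{(k)\intercal}\bz$, forcing $\bz=\lambda\widehat{\bz}^{(k)}$ with $\lambda\ge 0$, and then $\Omega(\bz)=\Omega(\widehat{\bz}^{(k)})$ forces $\lambda=1$; the same ``parallel, then fix the scale by positive homogeneity'' argument goes through at differentiability points of the overlap group $\ell_2$-norm. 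The genuine obstacle is that a norm need not be strictly convex on its unit sphere (think of $\ell_1$ or $\ell_\infty$), so for polyhedral norms the supporting hyperplane meets the graph of $\upsilon$ along a whole facet and the strict form of \emph{(i.1)} must be weakened to equality up to that facet/ray ambiguity — which is still sufficient to run the MM descent of step \emph{(ii)}. I would therefore either phrase the uniqueness under a strict-convexity hypothesis on $\Omega$ (covering $\ell_2$ and the smooth overlap group case) or state it in that relaxed form.
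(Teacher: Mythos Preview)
Your argument is essentially the paper's own: both proofs chain the same two concavity facts --- the tangent-line bound $\log\xi\le\xi-1$ applied with $\xi=\upsilon(\bz)/\upsilon(\widehat{\bz}^{(k)})$, followed by the supporting-hyperplane inequality for the concave function $\upsilon=1-\Omega$ --- to obtain
\[
-\upsilon(\widehat{\bz}^{(k)})\log\upsilon(\bz)+\upsilon(\widehat{\bz}^{(k)})\log\upsilon(\widehat{\bz}^{(k)})+\upsilon'(\widehat{\bz}^{(k)})^\intercal(\bz-\widehat{\bz}^{(k)})\ge 0,
\]
with equality at $\bz=\widehat{\bz}^{(k)}$, which is exactly the paper's displayed chain. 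Your treatment of the ``only if'' direction goes beyond the paper, which asserts it but does not argue it; your observation that strict uniqueness can fail for polyhedral norms and must be stated under a strict-convexity hypothesis (or in relaxed form) is a genuine refinement. One wording point: when you write that the affine term ``cancels in the difference $g-\mathcal{L}_\varrho$'' and then display the target inequality, note that the displayed bound is actually $g+\varrho u_k\log u_k\ge \mathcal{G}$ rather than $g\ge\mathcal{L}_\varrho$; this is consistent with the paper (which also concludes majorisation only ``up to an irrelevant additive constant''), but your sentence linking the cancellation to $g-\mathcal{L}_\varrho$ slightly obscures that the surrogate majorises the bare objective $\mathcal{G}(\bz)=-\bz^\intercal\bx$, not the full Lagrangian.
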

\begin{proof}
See \cite{lange.2013}. The proof is reported here for the sake of completeness
Consider the following inequalities
\begin{align}
&-\upsilon(\widehat{\mathbf{z}}^{(k)})\log(\upsilon(\bz))+\upsilon(\widehat{\mathbf{z}}^{(k)})\log(\upsilon(\widehat{\mathbf{z}}^{(k)}))+\upsilon^\prime(\widehat{\mathbf{z}}^{(k)})(\bz-\widehat{\mathbf{z}}^{(k)})\nonumber\\
\geq&-\frac{\upsilon(\widehat{\mathbf{z}}^{(k)})}{\upsilon(\widehat{\mathbf{z}}^{(k)})}\big(\upsilon(\bz)-\upsilon(\widehat{\mathbf{z}}^{(k)})\big)+\upsilon^\prime(\widehat{\mathbf{z}}^{(k)})(\bz-\widehat{\mathbf{z}}^{(k)})\nonumber\\
=&-\upsilon(\bz)+\upsilon(\widehat{\mathbf{z}}^{(k)})+\upsilon^\prime(\widehat{\mathbf{z}}^{(k)})(\bz-\widehat{\mathbf{z}}^{(k)})\nonumber\\
\geq&0,
\end{align}
based on the concavity of the functions $\log(y)$ and $\upsilon(\bz)$. Because equality holds throughout when $\bz = \widehat{\mathbf{z}}^{(k)}$, we have identified a novel function majorizing $0$ and incorporating a barrier for $\upsilon(\bz)$. The significance of this discovery is that the surrogate function
\begin{equation}
\label{eq:surrogate}
g(\bz\vert\widehat{\mathbf{z}}^{(k)})=-\bz^\intercal\bx-\varrho\upsilon(\widehat{\mathbf{z}}^{(k)})\log(\upsilon(\bz))+\varrho\upsilon^\prime(\widehat{\mathbf{z}}^{(k)})(\bz-\widehat{\mathbf{z}}^{(k)}),
\end{equation}
majorizes $\mathcal{L}_\varrho(\bz)$ up to an irrelevant additive constant. Minimization of the surrogate function drives $\mathcal{L}_\varrho(\bz)$ downhill while keeping the inequality constraints inactive. In the limit, one or more of the inequality constraints may become active.
\end{proof}
\noindent As described by, for instance, \citet{lange.2013}, \citet{lange.2016} and \citet{sun_etal.2017}, if $\mathcal{L}_\varrho(\mathbf{z} \vert \widehat{\mathbf{z}}^{(k)})$ is a 
majorizer of the convex function $\mathcal{L}_\varrho(\mathbf{z})$, then the sequence $\{\widehat{\mathbf{z}}^{(k)}\}$, defined through the recursion formula 
$\widehat{\mathbf{z}}^{(k+1)} = \arg\max_{\mathbf{z}} \mathcal{L}_\varrho(\mathbf{z} \vert \widehat{\mathbf{z}}^{(k)})$, converges to the global minimum of $\mathcal{L}_\varrho(\mathbf{z})$. 
Unfortunately, in our case, the optimization step of the surrogate function $g(\bz\vert\widehat{\mathbf{z}}^{(k)})$ in equation \eqref{eq:surrogate} does not admit a closed form expression, therefore, we must revert to the MM gradient algorithm. The Newton-Raphson's update for a fixed length $\gamma>0$ is 
\begin{equation}
\label{eq:newton_update}
\widehat{\mathbf{z}}^{(k+1)}=\widehat{\mathbf{z}}^{(k)}-\gamma\big(d^2g(\widehat{\mathbf{z}}^{(k)}\vert\widehat{\mathbf{z}}^{(k)})\big)^{-1}dg(\widehat{\mathbf{z}}^{(k)}\vert\widehat{\mathbf{z}}^{(k)}),
\end{equation}
and requires the first and second differentials
\begin{align}
dg(\widehat{\mathbf{z}}^{(k)}\vert\widehat{\mathbf{z}}^{(k)})&=df(\widehat{\mathbf{z}}^{(k)})\\
d^2g(\widehat{\mathbf{z}}^{(k)}\vert\widehat{\mathbf{z}}^{(k)})&=d^2f(\widehat{\mathbf{z}}^{(k)})-\varrho \upsilon^{\prime\prime}(\widehat{\mathbf{z}}^{(k)})+\frac{\varrho}{\upsilon(\widehat{\mathbf{z}}^{(k)})}\upsilon^\prime(\widehat{\mathbf{z}}^{(k)})\big(\upsilon^\prime(\widehat{\mathbf{z}}^{(k)})\big)^\intercal,
\end{align}
where $\upsilon^\prime(\widehat{\mathbf{z}}^{(k)})-\Omega
^\prime(\widehat{\mathbf{z}}^{(k)})$ and $\upsilon^{\prime\prime}(\widehat{\mathbf{z}}^{(k)})-\Omega^{\prime\prime}(\widehat{\mathbf{z}}^{(k)})$ in Section \ref{sec:barrier_methods}. Algorithm \ref{alg:dual_norm_MM} provides a short description of the main steps required by the MM method for solving the constrained optimization problem in equation \eqref{eq:copt}.\newline
\indent Figure \ref{fig:sim1_results} provides the results of two simulation experiments for the $\ell_2$-norm (\ref{fig:norm2}) and the group $\ell_2$-norm (\ref{fig:Gnorm2}). For the $\ell_2$-norm $\Omega_2(\bx)$ we simulated $n=50$ vectors $\bx\in\mathbb{R}^p$ of varying dimensions $p=(5, 10, 50, 100, 200, 500)$ and Figure \eqref{fig:norm2} provides the boxplots of the difference between the analytical (true) dual norm $\Omega^\ast_2(\bx)=\Omega_2(\bx)$ and the value obtained by running algorithm \ref{alg:dual_norm_MM}. For the group $\ell_2$-norm we instead considered groups of dimensions $p_G=(2,5,8,10)$ and let also vary the number of groups $n_G=(2,5,10,20)$. Therefore we simulated $n=50$ vectors $\bx\in\mathbb{R}^p$ of varying dimensions $p=p_G\times n_G$ and Figure \eqref{fig:Gnorm2} provides the boxplots of the difference between the analytical (true) dual norm $\Omega^\ast_G(\bx)=\max_{g=1,\dots,G}\frac{1}{\sqrt{w_g}}\Vert \bx_g\Vert_2$ and the value obtained by running algorithm \ref{alg:dual_norm_MM}.
%
\begin{algorithm*}[t!]
\begingroup
    \fontsize{10pt}{11pt}\selectfont
\caption{MM algorithm for dual norm evaluation} 
 \label{alg:dual_norm_MM}
 {Initialise the regression parameters: $\widehat{\bz}^{(0)}$ and evaluate the objective function $\mathcal{L}_\varrho\big(\widehat{\bz}^{(0)}\big)$ and set $k=0$ and set $\mathsf{maxiter}$, $\epsilon>0$, and $\varrho\gg0$}\;   
 \While(){$\big\vert\mathcal{L}_\varrho\big(\widehat{\bz}^{(k)}\big)-\mathcal{L}_\varrho\big(\widehat{\bz}^{(k-1)}\big)\big\vert>\epsilon$ and $k>1$}
 {
 \For(){$j$ \mbox{from} $1$ to $\mathsf{maxiter}$}{
{update $\widehat{\bz}^{(k+1)}$}, using the Newton-Raphson update in equation \eqref{eq:newton_update}\;
}
{evaluate $\mathcal{L}_\varrho\big(\widehat{\bz}^{(k+1)}\big)$}\;
}
\endgroup
\end{algorithm*}
%

\begin{figure}[!ht]
\begin{center}
\subfloat[$\ell_2$-norm]{\label{fig:norm2}\includegraphics[width=0.7\textwidth]{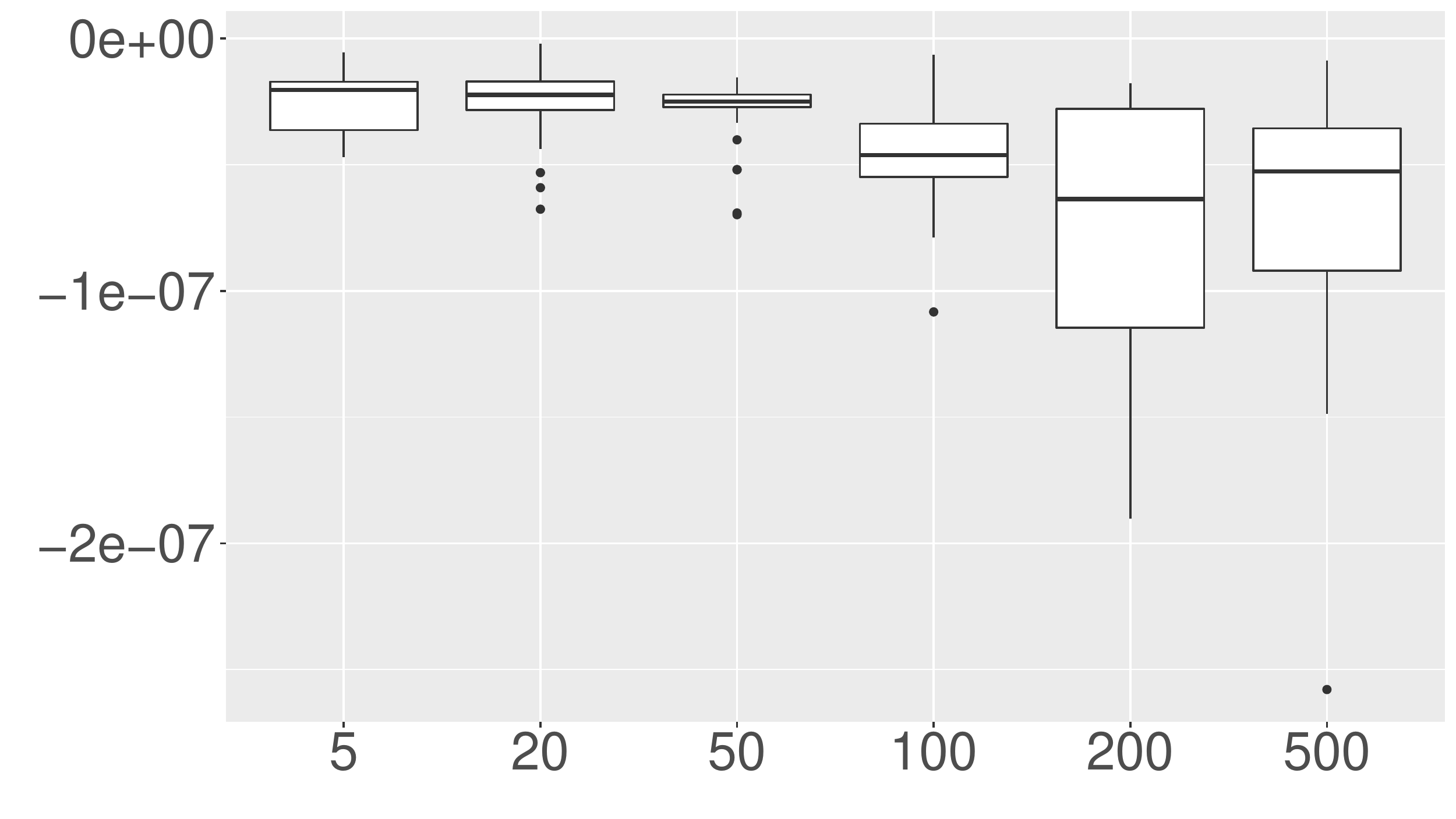}}\qquad
\subfloat[group $\ell_2$-norm]{\label{fig:Gnorm2}\includegraphics[width=0.7\textwidth]{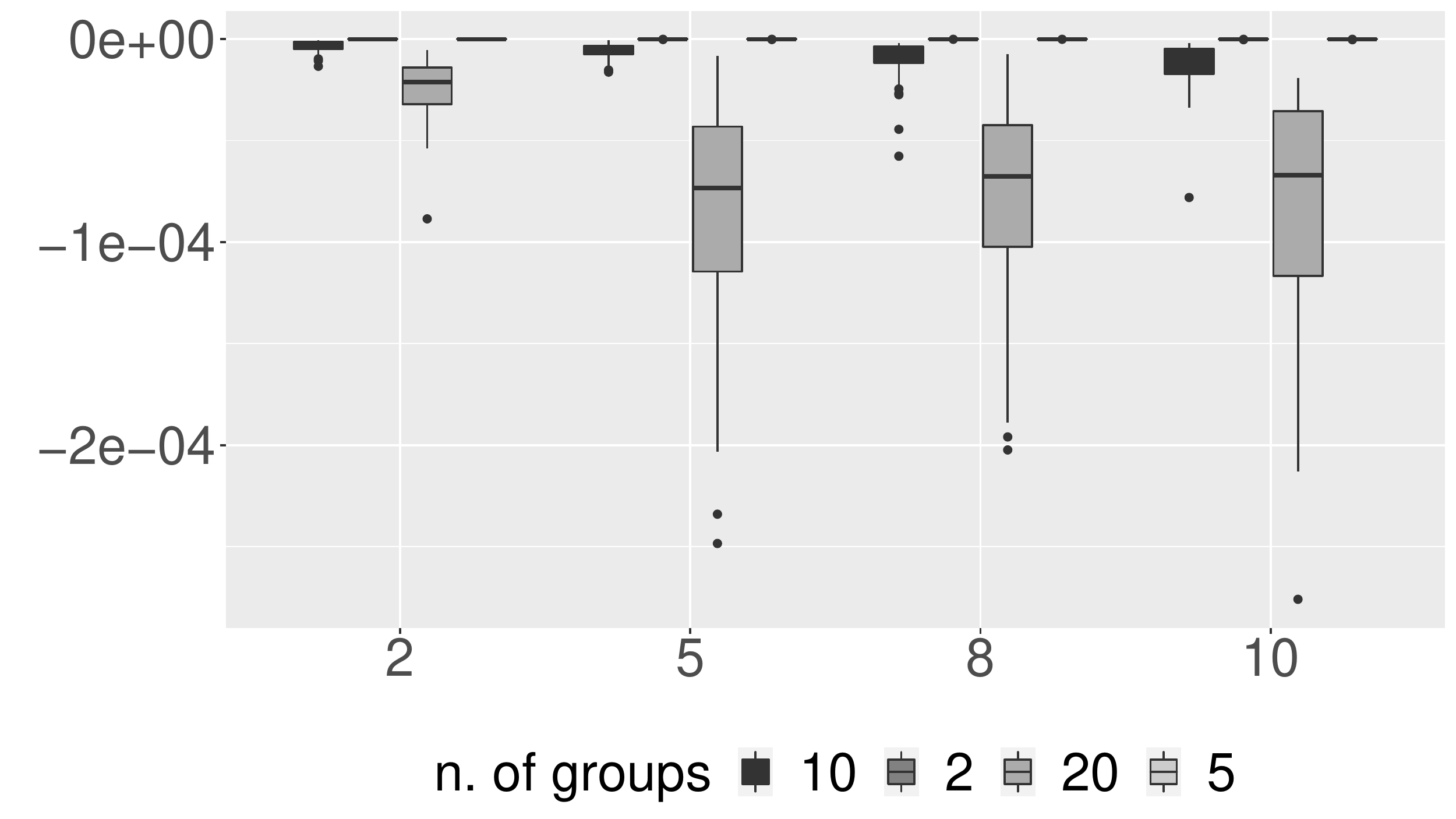}}
\caption{Box plots of the difference between the analytical dual norm and that obtained by running Algorithm \ref{alg:dual_norm_MM} over $50$ replications of random vectors of varying dimension (for the $\ell_2$-norm) and varying group-length and number of groups (for the group $\ell_2$-norm).}
\label{fig:sim1_results}
\end{center}
\end{figure}

\begin{figure}[!ht]
\begin{center}
\includegraphics[width=0.7\textwidth]{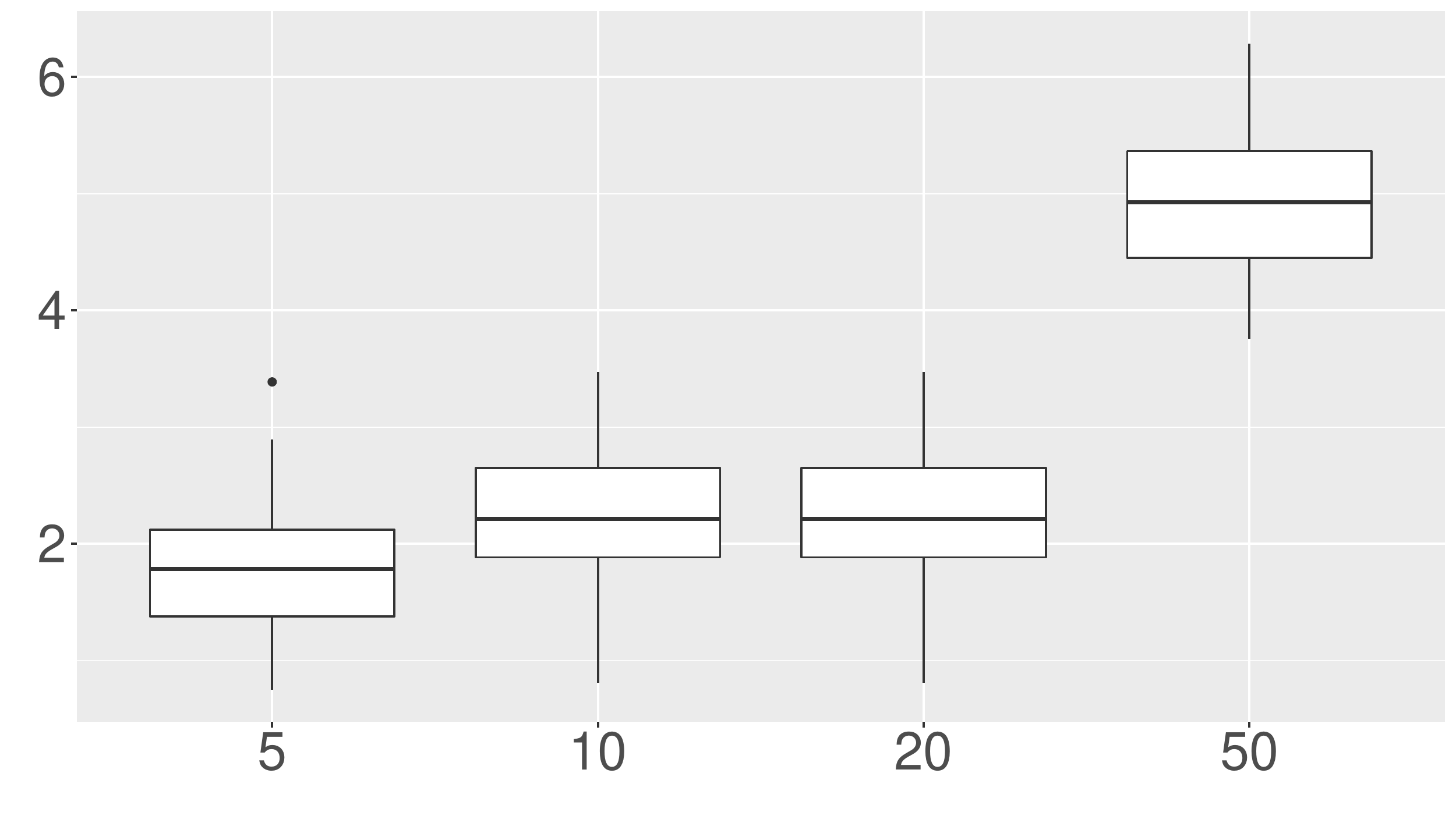}
\caption{Box plots of the difference between the $\Omega_{OG}(\bx)$ dual norm obtained by running Algorithm \ref{alg:dual_norm_MM} over $50$ replications of random vectors of varying dimension and varying group-length and number of groups.}
\label{fig:sim2_results}
\end{center}
\end{figure}

%
\section{Application}
\label{sec:appl}
%
\noindent In this application we consider the dual norm of overlap group $\ell_2$-norm, $\Omega_{OG}(\bx)$ introduced by \cite{bernardi_etal.2021} for kernel penalization in the functional regression context. The norm is defined as
\begin{equation}
\Omega^c_{\cal I}( u_{\mathcal{I}^c} ) = \sum_{b \in {\cal B}^c}  \Vert c_{b}^{({\cal I}^c)} \odot u_{\mathcal{I}^c} \Vert_2 ,
\end{equation}
where ${\cal B }\subset \{1, \dots, B, B+1\}$ is the set of indices of the true non-zero groups and ${\cal I} \subset \{(m,l): m=1, \dots, M; \, l = 1, \dots, L\}$ is the set of the indices of the non-zero coefficients $\psi_{m,l}$, $c_b =\mathsf{vec}(  \mathbf{S}_b \odot \mathbf{C})$, 
where $  \mathbf{S}_b$ is the $M \times L$ selection matrix with general entry $s_{m,l}^{(b)}$, defined as $1$ if the parameter $\psi_{m,l}$ belongs to group $b$ and 0 otherwise and $\mathbf{C}$ is the matrix with general element $c_{m,l}$ defined as $c_{ml} = \big( \sum_{b=1}^{B+1}  s_{m,l}^{(b)} \big)^{-1}$.\par
To check for the irrepresentable condition condition, we need to calculate the dual norm $\big(\Omega^c_{\cal B} \big)^*\big( {\mathbf{Z}}_0^\intercal {\mathbf{Z}}_1 ({\mathbf{Z}}_1^\intercal {\mathbf{Z}}_1)^{-1} \mathbf{r}_1  \big) $, where $\mathbf{r}_1$ is the vector containing, for $(m,l) \in \cal I$, the elements 
$\psi_{m,l} c^2_{m,l} \sum_{b  \in {\cal B}, b: \psi_b \ni \psi_{m,l} } \Vert c_b \odot \bpsi \Vert_2^{-1}$.
%
\section{Conclusion}
\label{sec:conclu}
%
\noindent In this paper we provide a new algorithm for solving numerically the dual norm optimization problem. The methods relies on the Majorization-Minimization principle of \cite{lange.2016} combined with barrier methods that efficiently convert a constrained optimization problem into an unconstrained one. Extensive simulation experiments have been performed in order to verify the correctness of operation, and evaluate the performance of the proposed method. Our results demonstrate the effectiveness of the algorithm in retrieving the dual norm even for large dimensions.

%
\bibliographystyle{apalike}
\bibliography{refs.bib}

\begin{thebibliography}{}

\bibitem[Bach et~al., 2012]{bach_etal.2012}
Bach, F., Jenatton, R., Mairal, J., and Obozinski, G. (2012).
\newblock Optimization with sparsity-inducing penalties.
\newblock {\em Foundations and Trends® in Machine Learning}, 4(1):1--106.

\bibitem[Bernardi et~al., 2021]{bernardi_etal.2021}
Bernardi, M., Canale, A., and Stefanucci, M. (2021).
\newblock Locally sparse function on function regression.

\bibitem[Boyd et~al., 2011]{boyd_etal.2011}
Boyd, S., Parikh, N., Chu, E., Peleato, B., and Eckstein, J. (2011).
\newblock Distributed optimization and statistical learning via the alternating
  direction method of multipliers.
\newblock {\em Foundations and Trends{\textregistered} in Machine Learning},
  3(1):1--122.

\bibitem[B\"{u}hlmann and van~de Geer, 2011]{buhlmann_van_de_geer.2011}
B\"{u}hlmann, P. and van~de Geer, S. (2011).
\newblock {\em Statistics for high-dimensional data}.
\newblock Springer Series in Statistics. Springer, Heidelberg.
\newblock Methods, theory and applications.

\bibitem[Hoerl and Kennard, 2000]{hoerl_kennard.2000}
Hoerl, A.~E. and Kennard, R.~W. (2000).
\newblock Ridge regression: Biased estimation for nonorthogonal problems.
\newblock {\em Technometrics}, 42(1):80--86.

\bibitem[Hunter and Lange, 2004]{hunter_lange.2004}
Hunter, D.~R. and Lange, K. (2004).
\newblock A tutorial on {MM} algorithms.
\newblock {\em Amer. Statist.}, 58(1):30--37.

\bibitem[Jenatton et~al., 2011]{jenatton2011structured}
Jenatton, R., Audibert, J.-Y., and Bach, F. (2011).
\newblock Structured variable selection with sparsity-inducing norms.
\newblock {\em The Journal of Machine Learning Research}, 12:2777--2824.

\bibitem[Lange, 2010]{lange.2010}
Lange, K. (2010).
\newblock {\em Numerical analysis for statisticians}.
\newblock Statistics and Computing. Springer, New York, second edition.

\bibitem[Lange, 2013]{lange.2013}
Lange, K. (2013).
\newblock {\em Optimization}, volume~95 of {\em Springer Texts in Statistics}.
\newblock Springer, New York, second edition.

\bibitem[Lange, 2016]{lange.2016}
Lange, K. (2016).
\newblock {\em M{M} optimization algorithms}.
\newblock Society for Industrial and Applied Mathematics, Philadelphia, PA.

\bibitem[Lange et~al., 2014]{lange_etal.2014}
Lange, K., Chi, E.~C., and Zhou, H. (2014).
\newblock A brief survey of modern optimization for statisticians.
\newblock {\em International Statistical Review}, 82(1):46--70.

\bibitem[Luenberger and Ye, 2021]{luenberger_yinyu.2021}
Luenberger, D.~G. and Ye, Y. ([2021] \copyright 2021).
\newblock {\em Linear and nonlinear programming}, volume 228 of {\em
  International Series in Operations Research \& Management Science}.
\newblock Springer, Cham, fifth edition.

\bibitem[Nocedal and Wright, 2006]{nocedal_wright.2006}
Nocedal, J. and Wright, S.~J. (2006).
\newblock {\em Numerical optimization}.
\newblock Springer Series in Operations Research and Financial Engineering.
  Springer, New York, second edition.

\bibitem[Osborne et~al., 2000]{osborne_etal.2000}
Osborne, M.~R., Presnell, B., and Turlach, B.~A. (2000).
\newblock On the {LASSO} and its dual.
\newblock {\em J. Comput. Graph. Statist.}, 9(2):319--337.

\bibitem[Sun et~al., 2017]{sun_etal.2017}
Sun, Y., Babu, P., and Palomar, D.~P. (2017).
\newblock Majorization-minimization algorithms in signal processing,
  communications, and machine learning.
\newblock {\em IEEE Transactions on Signal Processing}, 65(3):794--816.

\bibitem[Tibshirani, 1996]{tibshirani1996}
Tibshirani, R. (1996).
\newblock Regression shrinkage and selection via the lasso.
\newblock {\em Journal of the Royal Statistical Society: Series B
  (Methodological)}, 58(1):267--288.

\bibitem[Wu and Lange, 2010]{wu_etal.2010}
Wu, T.~T. and Lange, K. (2010).
\newblock The {MM} alternative to {EM}.
\newblock {\em Statist. Sci.}, 25(4):492--505.

\bibitem[Yuan and Lin, 2006]{yuan_lin.2006}
Yuan, M. and Lin, Y. (2006).
\newblock Model selection and estimation in regression with grouped variables.
\newblock {\em J. R. Stat. Soc. Ser. B Stat. Methodol.}, 68(1):49--67.

\bibitem[Zhao and Yu, 2006]{zhao_yu.2006}
Zhao, P. and Yu, B. (2006).
\newblock On model selection consistency of lasso.
\newblock {\em Journal of Machine Learning Research}, 7(90):2541--2563.

\bibitem[Zou and Hastie, 2005]{zou_hastie.2005}
Zou, H. and Hastie, T. (2005).
\newblock Addendum: ``{R}egularization and variable selection via the elastic
  net'' [{J}. {R}. {S}tat. {S}oc. {S}er. {B} {S}tat. {M}ethodol. {\bf 67}
  (2005), no. 2, 301--320; mr2137327].
\newblock {\em J. R. Stat. Soc. Ser. B Stat. Methodol.}, 67(5):768.

\end{thebibliography}

\end{document}